\newcommand{\vc}{{\sf c}}
\newcommand{\vu}{{\sf u}}
\newcommand{\vv}{{\sf v}}
\def\Z{\mathbb{Z}}
\newtheorem{thm}{Theorem}[section]
\newtheorem{example}[thm]{Example}
\date{}
\begin{document}
\title{New Constant-Weight Codes from \\Propagation Rules}
\author{Yeow~Meng~Chee,~\IEEEmembership{Senior Member,~IEEE,}
~Chaoping~Xing~and~Sze~Ling~Yeo
\thanks{ 
The research of Y. M. Chee is supported in part by the National
Research Foundation of Singapore under Research Grant NRF-CRP2-2007-03
and by the Nanyang Technological University
under Research Grant M58110040.
The research of C. Xing is supported in part by the National
Research Foundation of Singapore under Research Grant NRF-CRP2-2007-03
and the Singapore Ministry of Education under Research Grant T208B2206.}
\thanks{Y. M. Chee, C. Xing and S. L. Yeo
are with Division of Mathematical Sciences, School of Physical and
Mathematical Sciences, Nanyang Technological University, 
21 Nanyang Link, Singapore 637371 (email: {\tt ymchee@ntu.edu.sg},
{\tt xingcp@ntu.edu.sg}, {\tt yeosl@ntu.edu.sg}).}
\thanks{Corresponding author: C. Xing.}
}

\maketitle
\begin{abstract}
This paper proposes some simple propagation rules which give rise to new binary
constant-weight codes.

\end{abstract}

\begin{keywords}
\boldmath constant-weight codes, cosets, $q$-ary codes
\end{keywords}

\section{Introduction}

\PARstart{T}{he} ring $\Z/q\Z$ is denoted $\Z_q$.
We endow $\Z_q^n$ with
the {\em Hamming distance} metric $\Delta$: for $\vu,\vv\in \Z_q^n$,
$\Delta(\vu,\vv)$ is the number
of positions where $\vu$ and $\vv$ differ.
A ($q$-ary) {\em code} of length $n$ is a subset ${\cal C}\subseteq \Z_q^n$.
The elements of $\cal C$ are called {\em codewords}, and the {\em size}
of $\cal C$ is the number of codewords it contains. 
The {\em minimum distance} of a code ${\cal C}$
is $\Delta({\cal C})=\min_{\vu,\vv\in{\cal C},\vu\not=\vv} \Delta(\vu,\vv)$.
We often denote by $(n,d)_q$-code a $q$-ary code of length $n$ and
minimum distance at least $d$.

The {\em weight}, ${\rm wt}(\vu)$, of $\vu\in \Z_q^n$ is its distance from the origin,
that is, ${\rm wt}(\vu)=\Delta(\vu,{\sf 0})$.
For $0\leq w\leq n$, the ($q$-ary) {\em Johnson space} $J_q^n(w)$
is the set of all elements of $\Z_q^n$ having weight $w$, that is,
$J_q^n(w)=\{\vu\in \Z_q^n:{\rm wt}(\vu)=w\}$.
A ($q$-ary) {\em constant-weight code} of length $n$, distance $d$, and weight $w$,
denoted $(n,d,w)_q$-code, is a code ${\cal C}\subseteq J_q^n(w)$ such
that $\Delta({\cal C})\geq d$. 

We adopt the convention throughout
this paper that if $q$ is not specified, then we assume $q=2$.
Hence, for example, an $(n,d,w)$-code refers to an $(n,d,w)_2$-code,
and $J^n(w)$ refers to $J_2^n(w)$.

Binary constant-weight codes have been extensively studied for more
than four decades due to their fascinating combinatorial structures
and applications
\cite{Johnson:1972,GrahamSloane:1980,Brouweretal:1990,Agrelletal:2000,Barg:2002,Awetal:2003, Zhongetal:2003,EtzionSchwartz:2004,Xiaetal:2005,XingLing:2005,Ji:2006,Kuekesetal:2006,Smithetal:2006,Xiaetal:2006,Chee:2007,FuXia:2007,Gashkovetal:2007,GashkovTaub:2007,CheeLing:2009}. 
Given $n$, $d$, and $w$, the central problem of interest in binary constant-weight codes
is in the determination of $A(n,d,w)$, the largest possible
size of an $(n,d,w)$-code. Exact values
of $A(n,d,w)$ are known only for a few infinite families of parameters $n$, $d$, and $w$,
and in some other sporadic instances (see, for example, \cite{Brouweretal:1990,Agrelletal:2000}).
In light of the difficulty of determining $A(n,d,w)$ exactly,
various bounds have also been developed.
There are two online tables devoted to bounds on
$A(n,d,w)$: one maintained by Rain and Sloane
\cite{RainsSloane} and the other by Smith and Montemanni \cite{SmithMontemanni}. While the
former table considers codes of lengths not exceeding $63,$ the
latter table focuses mainly on codes for lengths between $29$ and
$63$, having small weights.

In this paper, we present simple propagation rules for binary
constant-weight codes through $q$-ary codes.
It turns out that some good binary constant-weight codes can be
obtained from these propagation rules. In particular, we improve on a number of
bounds in the online tables of Rain and Sloane \cite{RainsSloane}, and
Smith and Montemanni \cite{SmithMontemanni}. 

We remark that the table of Smith and Montemanni \cite{SmithMontemanni} was created because the
table of Rains and Sloane \cite{RainsSloane} had not been updated for many years. 
For code parameters that are not covered by Smith and Montemanni \cite{SmithMontemanni},
we have checked against recent literature, to the best of our efforts, in
ascertaining that our results here do indeed improve upon existing results.

\section{Propagation Rules}

In this section, we present some simple propagation rules for
binary constant-weight codes from $q$-ary codes. We begin with
a simple observation.

Let ${\cal C}\subseteq\Z_q^n$. 
For $\vu\in\Z^n_q$, we denote by $\vu+{\cal C}$ the {\em coset} of $\cal C$,
\begin{equation*}
\{\vu+\vc:\;\vc\in {\cal C}\}.
\end{equation*}
We also embed $\Z_2$ into $\Z_q$. It is evident that
$(\vu+{\cal C})\cap J^n(w)$ 
is a binary constant-weight code of weight $w$ and size
$N=|(\vu+{\cal C})\cap J^n(w)|$. Since the minimum distance $d'$ of
$(\vu+{\cal C})\cap J^n(w)$ is at least $d$ and $d'$ must be even, it
follows that 
$d'\geq 2\lfloor(d+1)/2\rfloor$. Thus, we have the following.

\vskip 10pt
\begin{thm}\label{2.1}
Let $0<w<n$.
If there exists an $(n,d)_q$-code $\cal C$, then there exists an
$(n,2\lfloor (d+1)/2\rfloor,w)$-code of size $N$, where
\begin{equation*}
N=\max_{\vu\in\Z_q^n} |(\vu+{\cal C})\cap J^n(w)|.
\end{equation*}
\end{thm}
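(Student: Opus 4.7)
The plan is to formalize the sketch given in the paragraph preceding the statement: pick the coset leader that maximizes the intersection with the Johnson space, then argue that the resulting set inherits the right distance.

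First I would fix a vector $\vu_0 \in \Z_q^n$ achieving the maximum, and define $\mathcal{C}' = (\vu_0 + \mathcal{C}) \cap J^n(w)$. By choice of $\vu_0$ we have $|\mathcal{C}'| = N$, and by construction $\mathcal{C}' \subseteq J^n(w)$, so $\mathcal{C}'$ is automatically a binary constant-weight code of length $n$ and weight $w$.

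The next step is to control the minimum distance. For any two distinct codewords $\vu_0 + \vc_1, \vu_0 + \vc_2 \in \mathcal{C}'$ with $\vc_1, \vc_2 \in \mathcal{C}$, the Hamming distance is
\begin{equation*}
\Delta(\vu_0+\vc_1, \vu_0+\vc_2) = \Delta(\vc_1,\vc_2) \geq d,
\end{equation*}
since translation in $\Z_q^n$ preserves the coordinates where two vectors agree. This yields a minimum distance $d' \geq d$ for $\mathcal{C}'$.

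Finally I would exploit the fact that both codewords lie in $J^n(w)$: for two binary vectors of equal weight $w$, the number of coordinates where the first has a $1$ and the second a $0$ equals the number where the reverse holds, so $\Delta(\vu_0+\vc_1, \vu_0+\vc_2)$ is even. Combining $d' \geq d$ with the parity constraint gives $d' \geq 2\lceil d/2 \rceil = 2\lfloor (d+1)/2\rfloor$, which is the desired bound. The only slightly delicate point is the implicit identification of $\Z_2 \subseteq \Z_q$ so that computing distances in $\Z_q^n$ agrees with computing them in $\{0,1\}^n$; since $\mathcal{C}'$ lives entirely in $J^n(w) \subseteq \{0,1\}^n$, this causes no issue, as Hamming distance only registers equality versus inequality of coordinates.
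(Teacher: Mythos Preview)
Your proof is correct and follows exactly the approach the paper sketches in the paragraph immediately preceding the theorem: intersect a coset with $J^n(w)$, inherit the distance $d$ by translation invariance, and then round up to the next even integer using the parity of distances between equal-weight binary vectors. You have simply made explicit the two facts the paper calls ``evident,'' so there is nothing to add.
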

\vskip 10pt

A simple bound on the size of the constant-weight codes in Theorem \ref{2.1}
can be obtained by considering the average size of
the cosets.

\vskip 10pt
\begin{thm}\label{2.2}
Let $0<w<n$.
If there exists an $(n,d)_q$-code of size $M$,
then
\begin{equation*}
A(n,2\lfloor (d+1)/2\rfloor,w) \geq
\left\lceil\frac{M\binom{n}{w}}{q^n}\right\rceil.
\end{equation*}
\end{thm}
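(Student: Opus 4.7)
The plan is to combine Theorem \ref{2.1} with a standard double-counting (averaging) argument over all cosets of $\cal C$. Theorem \ref{2.1} already guarantees an $(n,2\lfloor(d+1)/2\rfloor,w)$-code of size
\[
N=\max_{\vu\in\Z_q^n}|(\vu+{\cal C})\cap J^n(w)|,
\]
so it suffices to show that $N\geq\lceil M\binom{n}{w}/q^n\rceil$. Since $N$ is a maximum of integers, it is enough to prove that the \emph{average} value of $|(\vu+{\cal C})\cap J^n(w)|$ over $\vu\in\Z_q^n$ is at least $M\binom{n}{w}/q^n$, and then invoke the fact that $N$ is an integer to round up.

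To compute the average, I would evaluate the sum
\[
S=\sum_{\vu\in\Z_q^n}|(\vu+{\cal C})\cap J^n(w)|
=\sum_{\vu\in\Z_q^n}\sum_{\vc\in{\cal C}}\mathbf{1}[\vu+\vc\in J^n(w)]
\]
by swapping the order of summation. For each fixed $\vc\in{\cal C}$, the map $\vu\mapsto\vu+\vc$ is a bijection on $\Z_q^n$, so the inner sum equals the number of elements of $\Z_q^n$ that lie in $J^n(w)$, which is simply $|J^n(w)|=\binom{n}{w}$. Hence $S=M\binom{n}{w}$, and the average over the $q^n$ choices of $\vu$ is $M\binom{n}{w}/q^n$.

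From $N\geq M\binom{n}{w}/q^n$ and the integrality of $N$, we conclude $N\geq\lceil M\binom{n}{w}/q^n\rceil$, and the theorem follows from Theorem \ref{2.1}. There is no real obstacle here; the only point to be careful about is that $J^n(w)$ consists of binary vectors embedded in $\Z_q^n$, so $|J^n(w)|=\binom{n}{w}$ rather than some $q$-dependent quantity, and that the translation-invariance of counting relies on $\Z_q^n$ being a group under coordinate-wise addition so that $\vu\mapsto\vu+\vc$ is a bijection.
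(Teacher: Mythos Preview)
Your proposal is correct and follows essentially the same averaging/double-counting argument as the paper: the paper likewise computes $\sum_{\vu}|(\vu+\mathcal{C})\cap J^n(w)|=M\binom{n}{w}$ (phrased via indicators $\delta_{i,j}$, counting for each $\vv_j\in J^n(w)$ the $M$ translates containing it) and then takes the maximum over $\vu$ to beat the average. The only cosmetic difference is that you swap the sum and invoke translation invariance of $\Z_q^n$ for fixed $\vc$, whereas the paper fixes $\vv_j$ and counts the $\vu_i$; these are the same computation.
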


\begin{proof}
Let $\cal C$ be an $(n,d)_q$-code of size $M$.
Let $\vu_1,\vu_2,\ldots,\vu_{q^n}$ denote all the elements of $\Z_q^n$,
and let $\vv_1,\vv_2,\ldots,\vv_{\binom{n}{w}}$ denote all the
elements of $J^n(w)$. Define
\begin{equation*}
\delta_{i,j}=
\begin{cases}
1,&\text{if $\vv_j\in\vu_i+{\cal C}$}\\
0,&\text{if $\vv_j\not\in\vu_i+{\cal C}$}.
\end{cases}
\end{equation*}

For each $\vv_j\in J^n(w)$, there are $M$ elements
$\vu_i\in\Z_q^n$ such that $\vu_i+{\cal C}$ contains $\vv_j$ (to see
this, note that $\vv_j\in\vu_i+{\cal C}$ if and only if
$\vu_i=\vv_j+\vc$ for some $\vc\in {\cal C}$). Thus, 
\begin{equation*}
\sum_{1\leq i\leq q^n} \sum_{1\leq j\leq \binom{n}{w}} \delta_{i,j} = M\binom{n}{w}.
\end{equation*}
Hence, there exists at least one $\ell$,
$1\leq\ell\leq q^n$, such that 
\begin{equation*}
\sum_{1\leq j\leq \binom{n}{w}} \delta_{\ell,j} \geq \frac{M\binom{n}{w}}{q^n}.
\end{equation*}
The theorem now follows by noting
that the size of $(\vu_{\ell}+{\cal C})\cap J^n(w)$ is precisely 
$\sum_{1\leq j\leq \binom{n}{w}} \delta_{\ell,j}$, and we have
seen above that $(\vu_{\ell}+{\cal C})\cap J^n(w)$ is an
$(n,2\lfloor (d+1)/2\rfloor,w)$-code.
\end{proof}
\vskip 10pt

Next, we consider binary constant-weight codes of length $n+1$ from
$q$-ary codes of length $n$.

\vskip 10pt
\begin{thm}\label{2.3}
Let $0<w<n$.
Suppose there exists an $(n,d)_q$-code $\cal C$ of size $M$. Then,
\begin{enumerate}[(i)]
\item
there exists an $(n+1,2\lfloor(d+1)/2\rfloor,w)$-code
of size $N$, where
\begin{equation*}
N=\max_{\vu\in\Z_q^n} |(\vu+{\cal C})\cap (J^n(w-1)\cup J^n(w))|;
\end{equation*}
\item
\begin{equation*}
A(n+1,2\lfloor (d+1)/2\rfloor,w)\geq 
\left\lceil\frac{M(\binom{n}{w-1}+\binom{n}{w})}{q^n}\right\rceil.
\end{equation*}
\end{enumerate}
\end{thm}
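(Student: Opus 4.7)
The plan is to mimic the approach of Theorems \ref{2.1} and \ref{2.2}, but to enlarge the stock of usable vectors in each coset by allowing weights $w-1$ \emph{or} $w$ in the first $n$ coordinates, and then to rebalance everything to weight $w$ by appending one bit in coordinate $n+1$. Concretely, for each fixed $\vu\in\Z_q^n$, I would define a map on $(\vu+{\cal C})\cap(J^n(w-1)\cup J^n(w))$ that sends a weight-$(w-1)$ vector $\vx$ to $(\vx,1)$ and a weight-$w$ vector $\vx$ to $(\vx,0)$. The image lies in $J^{n+1}(w)$, and the map is injective because the appended bit is determined by the original weight.

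The key step of part (i) is then to verify that this image has minimum distance at least $2\lfloor(d+1)/2\rfloor$. Two codewords in $\vu+{\cal C}$ are at Hamming distance at least $d$, so I would split into cases according to whether the two preimages have the same weight or different weights. When both preimages have weight $w$, or both have weight $w-1$, the appended bits agree and the distance is unchanged; parity (the distance between two binary vectors of equal weight is even) forces the distance up to the next even integer, namely $2\lfloor(d+1)/2\rfloor$. When the weights differ, the appended bits differ and contribute $+1$ to the distance; here the original distance is odd and $\geq d$, so it is $\geq d$ when $d$ is odd and $\geq d+1$ when $d$ is even, and in either case the new total meets the bound $2\lfloor(d+1)/2\rfloor$. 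This parity case analysis is routine but is the only delicate part of the argument; everything else is bookkeeping.

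For part (ii), I would repeat the averaging argument of Theorem \ref{2.2} with the target set $J^n(w)$ replaced by $J^n(w-1)\cup J^n(w)$. Exactly as in that proof, each $\vv_j$ in the enlarged target lies in exactly $M$ of the $q^n$ cosets $\vu+{\cal C}$, since $\vv_j\in\vu+{\cal C}$ is equivalent to $\vu=\vv_j+\vc$ for some $\vc\in{\cal C}$. Summing the incidence indicator over all pairs $(\vu,\vv_j)$ therefore gives $M(\binom{n}{w-1}+\binom{n}{w})$, and the pigeonhole principle produces a coset whose intersection with $J^n(w-1)\cup J^n(w)$ has size at least $\lceil M(\binom{n}{w-1}+\binom{n}{w})/q^n\rceil$. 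Applying part (i) to that particular $\vu$ yields an $(n+1,2\lfloor(d+1)/2\rfloor,w)$-code of the asserted size and completes the proof.
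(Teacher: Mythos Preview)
Your proposal is correct and follows essentially the same approach as the paper: the same appended-bit construction $(\vx,1)$ or $(\vx,0)$ according to ${\rm wt}(\vx)$, and the same averaging argument for (ii). The only difference is cosmetic: where you carry out an explicit parity case analysis, the paper simply observes (as in the discussion preceding Theorem~\ref{2.1}) that appending a coordinate cannot decrease distances and that a binary constant-weight code has even minimum distance, which immediately forces $d'\geq 2\lfloor(d+1)/2\rfloor$.
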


\begin{proof} 
\begin{enumerate}[(i)]
\item Let $\vu\in\Z_q^n$ such that 
$|(\vu+{\cal C})\cap (J^n(w-1)\cup J^n(w))|$ 
achieves the maximum size $N$. It is clear that
${\cal C}'=(\vu+{\cal C})\cap (J^n(w-1)\cup J^n(w))$ is an $(n,d)$-code,
where each codeword has weight either $w-1$ or $w$. 
To each codeword $\vc\in{\cal C}'$, 
append a new coordinate which
takes on value one if ${\rm wt}({\vc})=w-1$
and value zero if ${\rm wt}({\vc})=w$. The
set of resulting codewords is an
$(n+1,2\lfloor(d+1)/2\rfloor,w)$-code.

\item Using the same arguments as in the proof of Theorem
\ref{2.2}, we get an $(n,d)$-code of size
$M(\binom{n}{w-1}+\binom{n}{w})/{q^n}$, in which
the weight of every codeword is either $w-1$ or $w$. By
appending a new coordinate to every codeword as in (i) above,
we get an $(n+1,2\lfloor(d+1)/2\rfloor),w)$-code of
the required size.
\end{enumerate}
\end{proof}

\section{Examples}

We provide some examples where the propagation rules
given by Theorems \ref{2.2} and \ref{2.3} lead to improved bounds
on $A(n,d,w)$.

In the tables of this section, a bold entry indicates that
the size of the code constructed here is larger than any known codes of the same parameters,
and a entry superscripted by an asterisk indicates that the size of the code
constructed here is of the same size as the best known code of the same parameters.
$M_{\rm max}$ denotes the lower bound on
$A(n,d,w)$ given by Theorems \ref{2.1} or \ref{2.3}(i),
and $M_{\rm avg}$ denotes the lower bound on $A(n,d,w)$ given by
Theorems \ref{2.2} or \ref{2.3}(ii). $M_{\rm RS}$ denotes the
lower bound on $A(n,d,w)$ in the tables of Rains and Sloane \cite{RainsSloane}.

\vskip 10pt
\begin{example}\label{3.1}
Let $\cal C$ be the Goethals $(63,7)$-code of size $2^{47}$ \cite{Goethals:1974} (see
\cite[Chapter 5]{LingXing:2004} for the structure of this code).

\begin{itemize}
\item Theorems \ref{2.2} and \ref{2.3}(ii) give
\begin{align*}
A(63,8,w) &\geq \left\lceil \binom{63}{w}/2^{16} \right\rceil, \\
A(64,8,w) &\geq \left\lceil\left( \binom{63}{w-1}+\binom{63}{w} \right)/2^{16}\right\rceil.
\end{align*}
The implications of these bounds are given in Table \ref{d=8}.
\begin{table}[h!]
\centering
\caption{Some constant-weight codes of distance eight}
\label{d=8}
\begin{tabular}{r|r|r}
\hline
\multicolumn{3}{r}{Lower Bounds on $A(63,8,w)$} \\
\hline
$w$ & $M_{\rm avg}$ & $M_{\rm RS}$ \\
\hline
7 & {\bf 8443} & 7182 \\
8 & {\bf 59096} & 50274 \\
9 & {\bf 361141} & - \\
10 & {\bf 1950158} & - \\
11 & {\bf 9396214} & -\\
12 & {\bf 40716926} & - \\
13 & {\bf 159735632} & - \\ 
14 & {\bf 570484400} & - \\
\hline
\end{tabular}~~~~~
\begin{tabular}{r|r|r}
\hline
\multicolumn{3}{r}{Lower Bounds on $A(64,8,w)$} \\
\hline
$w$ & $M_{\rm avg}$ & $M_{\rm RS}$ \\
\hline
7 & {\bf 9480} & 8064 \\ 
8 & {\bf 67538} & 57456 \\
9 & {\bf 420236} & - \\
10 & {\bf 2311298} & - \\
11 & {\bf 11346372} & - \\
12 & {\bf 50113140} & - \\
13 & {\bf 200452558} & - \\ 
14 & {\bf 730220032} & - \\
\hline
\end{tabular}
\end{table}

\item Shortening $\cal C$ at the last $i$ positions, 
$1\leq i \leq 46$, results in a $(63-i,7)$-code of size $2^{47-i}$.
It follows from Theorem \ref{2.2} that there exists a
$(63-i,8,7)$-code of size $\binom{63-i}{7}/2^{16}$.
In particular, when $i\in\{1,2,3\}$, this implies
\begin{align}
A(62,8,7) &\geq 7505, \\
A(61,8,7) &\geq 6657, \\
A(60,8,7) &\geq 5894.
\end{align}
The three lower bounds (1)--(3) improve those
in \cite{SmithMontemanni} (the corresponding lower bounds given there are
6693, 6223, and 5770, respectively, obtained by Smith et al. \cite{Smithetal:2006}).
\end{itemize}
\end{example}
\vskip 10pt

\begin{example}\label{3.2}
Let $\cal C$ be the Preparata $(63,5)$-code 
of size $2^{52}$ \cite{Preparata:1968} (see
\cite[Chapter 5]{LingXing:2004} for the structure of this code). 
Theorems \ref{2.2} and \ref{2.3}(ii) give
\begin{align*}
A(63,6,w) &\geq \left\lceil\binom{63}{w}/2^{11}\right\rceil, \\
A(64,6,w) &\geq \left\lceil\left(\binom{63}{w-1}+\binom{63}{w}\right)/2^{11}\right\rceil.
\end{align*}
We also found via computation
cosets of $\cal C$ achieving the maximum in
Theorems \ref{2.1} and \ref{2.3}(i).
The results are given in Tables \ref{n=63,d=6} and \ref{n=64,d=6}. 

\begin{table}[h!]
\centering
\caption{Lower Bounds on $A(63,6,w)$}
\label{n=63,d=6}
\begin{tabular}{r|r|r|r}
\hline
$w$ & $M_{\rm avg}$ & $M_{\rm max}$ & $M_{\rm RS}$ \\
\hline
5 & {3433} & 3906$^*$ & 3906 \\
6 & {33177} & 37758$^*$ & 37758 \\
7 & {\bf 270152} & {\bf 270468} & 264771 \\
8 & {\bf 1891062} & {\bf 1893276} & 1853397 \\
9 & {11556490} & 11594310$^*$ & 11594310 \\
10 & 62405042 & 62609274$^*$ & 62609274 \\
11 & {\bf 300678837} & {\bf 300700062} & 300496392 \\ 
12 & {\bf 1302941625} & {\bf 1302990507} & 1302151032 \\
13 & {5111540218} & {5112164988$^*$} & 5112164988 \\
14 & {18255500778} & 18257732100$^*$ & 18257732100 \\
\hline
\end{tabular}
\end{table}

\begin{table}[h!]
\centering
\caption{Lower Bounds on $A(64,6,w)$}
\label{n=64,d=6}
\begin{tabular}{r|r|r|r}
\hline
$w$ & $M_{\rm avg}$ & $M_{\rm max}$ & $M_{\rm RS}$ \\
\hline
5 & {\bf 3723} & {\bf 3906} & - \\
6 & {36609} & 41664$^*$ & 41664 \\
7 & {\bf 303329} & {\bf 303354} & - \\
8 & {\bf 2161214} & {\bf 2163744} & 2118168 \\
9 & {\bf 13447552} & {\bf 13447707} & - \\
10 & {73961530} & 74203584$^*$ & 74203584 \\
11 & {\bf 363083878} & {\bf 363105666} & - \\
12 & {\bf 1603620460} & {\bf 1603680624} & 1602647424 \\
13 & {\bf 6414481842} & {\bf 6414487191} & - \\ 
14 & {23367040996} & 23369897088$^*$ & 23369897088 \\
\hline
\end{tabular}
\end{table}
\end{example}
\vskip 10pt

\begin{example}
Let $\cal C$ be the (linear) $(31,9)$-code of size $2^{13}$
constructed by Grassl \cite{Grassl}.
\begin{itemize}
\item
We found via computation
cosets of $\cal C$ achieving the maximum in
Theorems \ref{2.1} and \ref{2.3}(i).
The results are given in Table \ref{d=10}.

\begin{table}[h!]
\centering
\caption{Some constant-weight codes of distance $10$}
\label{d=10}
\begin{tabular}{r|r|r}
\hline
\multicolumn{3}{r}{Lower Bounds on $A(31,10,w)$} \\
\hline
$w$ & $M_{\rm max}$ & $M_{\rm RS}$ \\
\hline
11 & {\bf 387} & - \\
12 & {\bf 612} & - \\
13 & {\bf 872} & - \\
14 & {\bf 1106} & - \\ 
\hline
\end{tabular}~~~
\begin{tabular}{r|r|r}
\hline
\multicolumn{3}{r}{Lower Bounds on $A(32,10,w)$} \\
\hline
$w$ & $M_{\rm max}$ & $M_{\rm RS}$ \\
\hline
11 & {\bf 585} & - \\
12 & {\bf 953} & - \\ 
13 & {\bf 1443} & - \\ 
14 & {\bf 1923} & - \\ 
\hline
\end{tabular}
\end{table}

\item Shortening $\cal C$ at the last two positions results in
a (linear) $(29,9)$-code of size $2^{11}$.
We found, via computation, cosets of this shortened code
achieving the maximum in
Theorem \ref{2.3}(i). This gives $A(30,10,12)\geq 390$.
Lower bounds on $A(30,10,12)$ are previously not known.
\end{itemize}
\end{example}
\vskip 10pt

\begin{example}
Let $\cal C$ be the (linear) BCH $(31,11)$-code of size $2^{11}$ 
\cite{Hocquenghem:1959,BoseRay-Chaudhuri:1960}
(see \cite[Chapter 8]{LingXing:2004} for the structure of this code).

\begin{itemize}
\item
We found, via computation,
cosets of $\cal C$ achieving the maximum in
Theorems \ref{2.1} and \ref{2.3}(i).
The results are given in Table \ref{d=12}.

\begin{table}[h!]
\centering
\caption{Some constant-weight codes of distance $12$}
\label{d=12}
\begin{tabular}{r|r|r}
\hline
\multicolumn{3}{r}{Lower Bounds on $A(31,12,w)$} \\
\hline
$w$ & $M_{\rm max}$ & $M_{\rm RS}$ \\
\hline
9 & {\bf 40} & - \\
10 & {\bf 87} & - \\
11 & {\bf 186} & - \\
12 & {\bf 310} & - \\ 
13 & {\bf 400} & - \\
14 & {\bf 510} & - \\
\hline
\end{tabular}~~~
\begin{tabular}{r|r|r}
\hline
\multicolumn{3}{r}{Lower Bounds on $A(32,12,w)$} \\
\hline
$w$ & $M_{\rm max}$ & $M_{\rm RS}$ \\
\hline
9 & {\bf 40} & - \\
10 & {\bf 122} & - \\
11 & {\bf 186} & - \\
12 & {\bf 496} & - \\ 
13 & {\bf 400} & - \\
14 & {\bf 900} & - \\
\hline
\end{tabular}
\end{table}

\item Shortening $\cal C$ at the last $i$ positions, $i\in\{1,2\}$,
results in a $(31-i,11)$-code of size $2^{11-i}$.
We found, via computation, cosets of these shortened codes
achieving the maximum in 
Theorems \ref{2.1} and \ref{2.3} (i). These provide the lower bounds
\begin{align*}
A(29,12,11) &\geq 76, \\
A(29,12,12) &\geq 114,\\
A(29,12,13) &\geq 140,
\end{align*}
and
\begin{align*}
A(30,12,10) &\geq 66, \\
A(30,12,11) &\geq 120, \\
A(30,12,12) &\geq 190, \\
A(30,12,13) &\geq 234, \\
A(30,12,14) &\geq 288.
\end{align*}
Previously, no lower bounds are known on $A(n,12,w)$ for these parameter sets.
\end{itemize}
\end{example}

\vskip 10pt
\begin{example}
Let $\cal C$ be the (linear) $(31,13)$-code of size $2^7$ constructed by Grassl \cite{Grassl}.
We found, via computation, cosets of $\cal C$ achieving the maximum in
Theorem \ref{2.3}(i). These provide the lower bounds
\begin{align*}
A(32,14,12) &\geq 29, \\
A(32,14,13) &\geq 42.
\end{align*}
Lower bounds on $A(32,14,w)$, $w\in\{12,13\}$, are previously not known.
\end{example}

\vskip 10pt
\begin{example}
Let ${\cal C}_0$ be the (linear) Reed-Muller $(32,16)$-code of size $2^6$,
and let $\cal C$ be the
code obtained from ${\cal C}_0$ by puncturing it at the last
position. Then $\cal C$ is a $(31,15)$-code of size $2^6$.
We found, via computation, cosets of $\cal C$ achieving the maximum in
Theorems \ref{2.1} and \ref{2.3}(i). These provide the lower bounds
\begin{align*}
A(n,16,13) &\geq 16, \\
A(n,16,14) &\geq 21, \\
A(n,16,15) &\geq 31,
\end{align*}
for $n\in\{31,32\}$. Lower bounds on $A(n,16,w)$ are previously not known for these
parameters.
\end{example}

\section*{Acknowledgment}
The authors are grateful to Ding Yang and Chen Jie for their help on
programming, to Xing Zhengrong for several
discussions on Theorem \ref{2.1}, and to the anonymous
reviewer for invaluable comments and suggestions.

\end{document}